\documentclass[conference,doublecolumn,a4paper]{IEEEtran}

\usepackage[english]{babel}
\usepackage[T1]{fontenc}
\usepackage[utf8]{inputenc}
\usepackage{subfig}
\usepackage{multirow}
\usepackage[font = scriptsize, labelfont = bf, labelsep = period, tableposition = top, singlelinecheck = false, justification = centering]{caption}
\usepackage{placeins,verbatim}
\usepackage{amssymb}
\usepackage{amsmath}
\usepackage{amsthm}
\usepackage{latexsym}
\usepackage{wrapfig}
\usepackage{float}
\usepackage{mathtools}
\usepackage{bm}
\usepackage{listings}
\usepackage{textcomp}
\usepackage{setspace}
\usepackage[dvipsnames]{xcolor}
\usepackage{etoolbox}
\usepackage{hyperref}
\usepackage{tikz}
\usetikzlibrary{shapes,snakes,calc}

\newtheorem{definition}{Definition}
\newtheorem{lemma}{Lemma}
\newtheorem{theorem}{Theorem}

\newtheorem{remark}{Remark}

\newif\ifcomment
\commenttrue

\newcommand{\ma}[1]{\ifcomment \textcolor{orange}{Matteo: #1} \fi}
\newcommand{\lh}[1]{\ifcomment \textcolor{OliveGreen}{Lukas: #1} \fi}

\newif\ifcommentLater
\commentLaterfalse
\newcommand{\todoLater}[1]{\ifcommentLater {#1} \fi}


\newcommand{\ie}{\emph{i.e.}, }

\definecolor{dartmouthgreen}{rgb}{0.05, 0.5, 0.06}

\makeatletter
\def\smalloverbrace#1{\mathop{\vbox{\m@th\ialign{##\crcr\noalign{\kern3\p@}%
  \tiny\downbracefill\crcr\noalign{\kern3\p@\nointerlineskip}%
  $\hfil\displaystyle{#1}\hfil$\crcr}}}\limits}
\makeatother

\makeatletter
\def\smallunderbrace#1{\mathop{\vtop{\m@th\ialign{##\crcr
   $\hfil\displaystyle{#1}\hfil$\crcr
   \noalign{\kern3\p@\nointerlineskip}%
   \tiny\upbracefill\crcr\noalign{\kern3\p@}}}}\limits}
\makeatother

\makeatletter
\renewcommand*\env@matrix[1][*\c@MaxMatrixCols c]{%
  \hskip -\arraycolsep
  \let\@ifnextchar\new@ifnextchar
  \array{#1}}
\makeatother

\AtBeginEnvironment{matrix}{\setlength{\arraycolsep}{5pt}}
\setcounter{MaxMatrixCols}{20}

\DeclareMathOperator{\diagonal}{diag}

\DeclareMathOperator{\dimension}{dim}

\DeclareMathOperator{\spacespan}{span}

\DeclareMathOperator{\lcm}{lcm}

\def\={\coloneqq}							
\def\tpar#1{\left( #1 \right)}    			
\def\gpar#1{\left\{ #1 \right\}} 			
\def\diag#1{\diagonal\! \tpar{#1}}			
\def\dim#1{\dimension \tpar{#1}}			
\def\Span#1{\spacespan\gpar{#1}}
\def\cb#1{| #1 \rangle}						
\def\rb#1{\langle #1 |}						
\def\ppmatrix#1{\begin{pmatrix}#1\end{pmatrix}} 
\def\q#1{\mathfrak{q}_\mathsf{#1}}

\def\a{\alpha}

\def\b{\beta}

\def\c{\gamma}
\def\C{\mathbb{C}}

\def\cB{\mathcal{B}}
\def\cC{\mathcal{C}}
\def\cD{\mathcal{D}}

\def\cJ{\mathcal{J}}
\def\cL{\mathcal{L}}

\def\cS{\mathcal{S}}
\def\cV{\mathcal{V}}
\def\cW{\mathcal{W}}
\def\co{\cb{1}}								
\def\cz{\cb{0}}								

\def\d{\delta}

\def\F{\mathbb{F}}

\def\Fq{\F_{q}}

\def\H{\mathcal{H}}							


\def\k{\kappa}
\def\l{\lambda}

\def\N{\mathbb{N}}



\def\s{\sigma}
\def\smix{\s_\text{mix}}

\def\sW{\mathsf{W}}

\def\sX{\mathsf{X}}

\def\sZ{\mathsf{Z}}

\def\tV{\text{V}}



\newcounter{alp}
\newcounter{ara}
\newcounter{rom}

\usepackage[innermargin=0.545in,outermargin=0.545in,top=0.6in,bottom=0.75in]{geometry}
\setlength{\abovedisplayskip}{5pt} 
\setlength{\belowdisplayskip}{5pt}
\setlength{\topsep}{5pt} 

\title{High-Rate Quantum Private Information Retrieval with Weakly Self-Dual Star Product Codes}

\author{
  \IEEEauthorblockN{Matteo Allaix \IEEEauthorrefmark{1}, Lukas Holzbaur \IEEEauthorrefmark{2}, Tefjol Pllaha \IEEEauthorrefmark{1}, Camilla Hollanti \IEEEauthorrefmark{1}}

	\IEEEauthorblockA{\small \IEEEauthorrefmark{1} Aalto University, Finland. E-mails:
    \{matteo.allaix, tefjol.pllaha, camilla.hollanti\}@aalto.fi
	}
	\IEEEauthorblockA{\small \IEEEauthorrefmark{2} Technical University of Munich, Germany. E-mail:  lukas.holzbaur@tum.de
    }
  \thanks{L.~Holzbaur was supported by the German Research Foundation (Deutsche Forschungsgemeinschaft, DFG) under Grant No.~WA$3907/1-1$. 
  
  C.~Hollanti and M.~Allaix were supported by the Academy of Finland, under Grants No. 318937 and 336005. }
}
\IEEEoverridecommandlockouts

\begin{document}

\maketitle

\begin{abstract}
In the classical private information retrieval (PIR) setup, a user wants to retrieve a file from a database or a distributed storage system (DSS)  without revealing the file identity to the servers holding the data. In the quantum PIR (QPIR) setting, a user privately retrieves a classical file by receiving quantum information from the servers. The QPIR problem has been treated by Song \emph{et al.} in the case of replicated servers, both with and without collusion. QPIR over $[n,k]$ maximum distance separable (MDS) coded servers was recently considered by Allaix \emph{et al.}, but the collusion was essentially restricted to $t=n-k$ servers in the sense that a smaller $t$ would not improve the retrieval rate. In this paper, the QPIR setting is extended to allow for retrieval with high rate for any number of colluding servers $t$ with $1 \leq t \leq n-k$.
Similarly to the previous cases, the rates achieved are better than those known or conjectured in the classical counterparts, as well as those of the previously proposed coded and colluding QPIR schemes. This is enabled by considering the stabilizer formalism and weakly self-dual generalized Reed--Solomon (GRS) star product codes.
\end{abstract}

\section{Introduction}

Private information retrieval (PIR) \cite{chor1995private} enables a user to download a data item from a database without revealing the identity of the retrieved item to the database owner (user privacy). If additionally the user is supposed to obtain no information about any file other than the requested file (server privacy), the problem is referred to as \emph{symmetric} PIR (SPIR). In recent years, PIR has gained renewed interest in the setting of distributed storage systems (DSSs), where the servers are storing possibly large files and may \emph{collude}, \emph{i.e.}, exchange their obtained queries. To protect from data loss in the case of the failure of some number of servers, such systems commonly employ erasure-correcting codes, \emph{e.g.}, maximum distance separable (MDS) codes \cite{macwilliams1977theory}. 
The capacity of PIR is known in a variety of settings \cite{sun2017capacity,sun2018capacity,banawan2018capacity,wang2017linear,Banawan2019byzantine}, but is still open in its full generality for coded and colluding servers \cite{freij2017private,Sun2018conjecture}. Progress towards the general coded colluded PIR capacity was recently made in \cite{Holzbaur2019ITW,holzbaur2019capacity}.

\begin{table*}[t]
    \centering
    \caption{Known capacity results with $n$ servers. For the classical PIR capacities, we report the asymptotic results with respect to the number of files. 
    The result in red is a conjectured result, but a protocol achieving that rate was proposed in \cite{freij2017private}. The QPIR results in blue were proved for $n=2$,$n=t+1$, and $n=k+t$ servers, respectively, with 2-dimensional quantum systems. The other two QPIR results were proved with $q$-dimensional quantum systems. The result in green is proved in this paper.}
    \begin{tabular}{|l|cc|cc|cc|}
    \hline
    \textsc{Capacities} & PIR & ref. & SPIR & ref. & QPIR & ref. \\
    \hline
    Replicated storage, & \multirow{2}{*}{$1-\frac{1}{n}$} & \multirow{2}{*}{\cite{sun2017replicated}} & \multirow{2}{*}{$1-\frac{1}{n}$}  & \multirow{2}{*}{\cite{sun2018capacity}} & \multirow{2}{*}{\textcolor{blue}{1}} & \multirow{2}{*}{\cite{song2019capacitymultiple}} \\
    no collusion & & & & & & \\
    \hline
    Replicated storage, & \multirow{2}{*}{$1-\frac{t}{n}$} & \multirow{2}{*}{\cite{sun2017capacity}} & \multirow{2}{*}{$1-\frac{t}{n}$}  & \multirow{2}{*}{\cite{wang2017colluding}} & \textcolor{blue}{$\geq \frac{2}{t+2}$} & \cite{song2019capacitycollusion} \\
    $t$-collusion & & & & & $\min\gpar{1,\frac{2(n-t)}{n}}$ & \cite{song2020} \\
    \hline
    $[n,k]$-MDS coded & \multirow{2}{*}{\textcolor{red}{$1-\frac{k+t-1}{n}$}} & \multirow{2}{*}{\cite{freij2017private}} & \multirow{2}{*}{$1-\frac{k+t-1}{n}$}  & \multirow{2}{*}{\cite{wang2017linear}} & \textcolor{blue}{$\geq \frac{2}{k+t-1}$} & \cite{allaix2020quantum} \\
    storage, $t$-collusion & & & & & \textcolor{dartmouthgreen}{$\geq \min\gpar{1,\frac{2(n-k-t+1)}{n}}$} & -- \\
    \hline
    \end{tabular}
    \label{tab:Capacities}
\end{table*}

The problem of PIR has also been considered in the quantum communication setting \cite{kerenidis2004quantum, gall2011quantum,giovannetti2008quantum}, where the problem is referred to as \emph{quantum} PIR (QPIR). More recently, Song \emph{et al.} \cite{song2019capacitymultiple,song2019capacitycollusion,song2020} introduced a scheme for a replicated storage system with classical files, where the servers respond to user's (classical) queries by sending quantum systems. The user is then able to privately retrieve the file by measuring the quantum systems. The servers are assumed to share some entangled states, while the user and the servers are not entangled. The non-colluding case was considered in \cite{song2019capacitymultiple}, and was shown to have capacity\footnote{The quantum PIR schemes in \cite{song2019capacitymultiple,song2019capacitycollusion,song2020} and in this work are symmetric. 
For the comparison of our rates to the classical setting we will focus on the asymptotic  non-symmetric rates for the latter, which also coincide with the SPIR rates, cf. Table \ref{tab:Capacities}.} equal to one. This is in stark contrast to the classical replicated (asymptotic) PIR  capacity of $1-\frac{1}{n}$ for $n$ servers. The case of QPIR for all but one servers colluding, \emph{i.e.}, $t=n-1$, was considered in \cite{song2019capacitycollusion}, again achieving higher capacity than the classical counterpart. In this case, the QPIR capacity is $\frac{2}{n}$, while classically (and asymptotically) it is $\frac{1}{n}$. This work was extended to $[n,k]$ MDS-coded data for collusion of up to  $t= n-k$ servers in \cite{allaix2020quantum}, and an analogous rate improvement was achieved. The scheme presented therein naturally also resists $t<n-k$ colluding servers. However, as pointed out in \cite[Remark 3]{allaix2020quantum}  
and in contrast to the scheme proposed in the present work, it is not able to benefit from the potential rate improvement made possible by the smaller collusion.
In~\cite{song2020}, the authors extend their work \cite{song2019capacitymultiple,song2019capacitycollusion} by considering symmetric QPIR that can resist any $t$ servers colluding. They prove that the $t$-private QPIR capacity is $1$ for $1\leq t \leq n/2$ and $2(n-t)/n$ for $n/2 < t < n$ and they use the \emph{stabilizer formalism}~\cite{Gottesman97} to construct a capacity-achieving protocol. For the reader's convenience, we report some known results on the capacity in Table~\ref{tab:Capacities}.


\textbf{Contributions.} We consider a Generalized Reed--Solomon (GRS) coded storage system with (classical) files, where the servers respond to user's (classical) queries by sending quantum systems. The user is then able to privately retrieve the file by measuring the quantum systems. The servers are assumed to share some entangled state, while the user and the servers share no entanglement. We generalize the QPIR protocol for replicated storage systems protecting against collusion \cite{song2020} to the case of $[n,k]$-GRS coded servers and arbitrary $t$-collusion by applying the star product scheme~\cite{freij2017private}. Hence, the protocol of \cite{song2020} is the special case of $k=1$ in our protocol. This can be seen as trading off collusion protection for reduced storage overhead. The achieved rate $\sim \min\gpar{1,\frac{2(n-k-t+1)}{n}}$ (cf. Theorem~\ref{thm:rate}) is higher than the conjectured asymptotic rate $1 - \frac{k+t-1}{n}$ in the classical coded and colluding PIR~\cite{freij2017private}.
Note that this rate is $1$ if $n\geq 2(k+t-1)$, while in \cite{allaix2020quantum} the rate is $\sim \frac{2}{k+t}$ regardless of the number of servers.

\section{Basics on PIR and Quantum Computation} \label{sec:preliminaries}

\textbf{Notation.} We denote by $[n]$ the set $\gpar{1,2,\ldots,n}, n \in \N$, and by $\Fq$ the finite field of $q$ elements. For a linear code of length $n$ and dimension $k$ over $\Fq$ we write $[n,k]$. For a matrix $A$ we write $A^\top$ for its transpose and $A^\dagger$ for its conjugate transpose. We will frequently deal with $m \b \times 2n$ matrices, where sub-blocks of $\beta$ rows and the pair of columns $s$ and $n+s$ semantically belong together. We therefore index such a matrix $Y$ by two pairs of indices $(i,b),\ i\in[m], b\in[\beta]$ and $(p,s),\ p\in [2], s\in [n]$, where $Y^{i,b}_{p,s}$ denotes the symbol in row $(i-1)\beta + b$ and column $(p-1)n+s$, \ie the symbol in the $b$-th row of the $i$-th sub-block of rows and the $s$-th column of the $p$-th sub-block of columns. Omitting of an index implies that we take all positions, \ie $Y^{i}$ denotes the $i$-th subblock of $\b$ columns, $Y^{i,b}$ the row $(i-1)\beta + b$, $Y_{p}$ the $p$-th subblock of $n$ columns, and $Y_{p,s}$ the column $(p-1)n+s$. For the reader's convenience, we sometimes imply the separation of the subblocks of columns by a vertical bar in the following. We denote by $e^\l_\c$ the standard basis column vector of length $\l$ in $\Fq^\l$ with a 1 in position $\c \in [\l]$. Given $a \in [\a], b \in [\b]$, it will help our notation to call \emph{coordinate} $(a,b)$ the position $\b (a - 1) + b$ in a vector of length $\a\b$. For instance, $e^{2\cdot 3}_{(2,1)}=e^6_4=(0,0,0,1,0,0)$. The function $\d_{i,j}$ is the Kronecker delta and $I_\nu$ is the $\nu \times \nu$ identity matrix. For a $\mu \times \nu$ zero matrix $0^{\mu \times \nu}$ and matrices $M_1,M_2 \in \Fq^{\mu \times \nu}$
\[
\diag{M_1,M_2} = \ppmatrix{M_1 & 0^{\mu \times \nu} \\ 0^{\mu \times \nu} & M_2} \in \Fq^{2\mu \times 2\nu}.
\]

For two vectors $c,d\in \F^n$ we define the (Hadamard-) star-product as $c\star d = (c_1d_1,c_2d_2,\ldots,c_nd_n)$. For two codes $\cC, \cD \subseteq \F^{n}$ we denote $\cC \star \cD = \langle \{ c \star d \ | \ c\in \cC, d\in \cD \} \rangle$.
Observe that, as the star-product is an element-wise operation, we have 
\begin{equation}\label{eq:starCartesian}
    \tpar{\cC \times \cC} \star \tpar{\cD \times \cD} = \tpar{\cC \star \cD} \times \tpar{\cC \star \cD} \ .
\end{equation}

\textbf{Linear codes and Distributed Data Storage.} We consider a distributed storage system employing error/erasure correcting codes to protect against data loss (for an illustration see Figure~\ref{fig:DSS}). 
To this end, let $X$ be an $m\beta \times 2k$ matrix containing $m$ files $X^i,\ i\in [m]$, each consisting of $2 \beta k$ symbols of $\Fq$. 

This matrix is encoded with a linear code $\cC$ of length $2n$ and dimension $2k$, which is the Cartesian product of an $[n,k]$ code over $\Fq$ with itself\footnote{We choose this description of the storage code because this structure is required for the quantum PIR scheme. However, note that the system can equivalently be viewed as being encoded with an $[n,k]$ over $\F_{{\sf q}^2}$, where each of the servers stores one column of the resulting codeword matrix.}, i.e., $\cC = \cC' \times \cC'$. It therefore has a generator matrix $G_\cC = \diag{G_{\cC'},G_{\cC'}}$, where $G_{\cC'}$ is a generator matrix of $\cC'$. The $m\beta \times 2n$ matrix of encoded files is given by $Y = X \cdot G_\cC$. Server $s \in [n]$ stores columns $s$ and $n+s$ of $Y$, \ie it stores $Y_{1,s}$ and $Y_{2,s}$. 

In this work we consider systems encoded with (the Cartesian product of) generalized Reed-Solomon (GRS) codes (cf.~\cite[Ch.~10]{macwilliams1977theory}), a popular class of MDS codes. Among coded storage systems, these have proven to be particularly well-suited for PIR and general schemes exist for a wide range of parameters \cite{tajeddine2018private, freij2017private, tajeddine2019private}. The key idea is to design the queries such that the retrieved symbols are the sum of a codeword of another GRS code (of higher dimension), which we refer to as the \emph{star-product code}, plus a vector depending only on the desired file. To obtain the desired file, the codeword part is projected to zero, leaving only desired part of the responses. In the QPIR system we consider in the following, this projection is part of the quantum measurement. This imposes a constraint on this star-product code, namely, that the code is (weakly) self-dual. In the following, we collect/establish the required theoretical results on GRS codes and their star-products.

\begin{definition}[Weakly self-dual code]
  \label{def:WeaklySelfDualCode}
  We say that an $[n,k]$ code $\cC$ is \emph{weakly self-dual} if $\cC^\perp \subseteq \cC$ and \emph{self-dual} if $\cC^\perp = \cC$. It is easy to see that any such code with parity-check matrix $H$ has a generator matrix of the form $G = (H^\top \ \ F^\top)^\top$ for some $(2k-n)\times n$ matrix $F$.
\end{definition}

\begin{lemma}[Follows from {\cite[Theorem~3]{grass2008self}}]\label{lem:existenceSelfDual}
  For $q=2^m$ there exist self-dual GRS $[2k,k]$ codes over $\Fq$ for any $k\in [2^{m-1}]$ and code locators $\cL$.
\end{lemma}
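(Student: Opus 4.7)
The plan is to invoke the classical duality formula for GRS codes together with the fact that in characteristic $2$ every field element admits a unique square root. Writing $\genrs_k(\cL, v)$ for the GRS code with distinct code locators $\cL = (\alpha_1,\dots,\alpha_n)$ and nonzero column multipliers $v = (v_1,\dots,v_n)$, one has the classical identity $\genrs_k(\cL,v)^\perp = \genrs_{n-k}(\cL, v^*)$ where $v_i^{*} = v_i^{-1} \prod_{j \ne i}(\alpha_i - \alpha_j)^{-1}$. In the self-dual regime $n = 2k$, both codes have dimension $k$, so equality $\genrs_k(\cL, v) = \genrs_k(\cL, v)^\perp$ is equivalent to $v$ and $v^*$ being proportional, i.e., $v_i^2 = \lambda \prod_{j \ne i}(\alpha_i - \alpha_j)^{-1}$ for some $\lambda \in \Fq^*$ and every $i \in [2k]$.

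The decisive step is to exploit the hypothesis $q = 2^m$: the Frobenius map $x \mapsto x^2$ is an automorphism of $\Fq$, so every nonzero element has a unique nonzero square root. Hence, for any choice of $2k$ distinct code locators in $\Fq$ and any $\lambda \in \Fq^*$, the equations above admit a unique solution for $v$, producing a valid vector of column multipliers. Taking $\lambda = 1$ is convenient; any other scalar can be absorbed by replacing $v$ with $\sqrt{\lambda^{-1}}\, v$, so no generality is lost. The constraint $k \le 2^{m-1}$ is precisely what ensures $2k \le q$ distinct locators exist in $\Fq$, so the construction is feasible for all admissible $k$ and arbitrary $\cL$.

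Full self-duality (not only weak self-duality) then falls out: the construction of $v$ forces $\genrs_k(\cL, v) \subseteq \genrs_k(\cL, v)^\perp$ via the duality formula, and equality of dimensions closes the argument. I do not expect a real obstacle here; the only delicate point is that this approach would fail in odd characteristic, where solvability of $v_i^2 = c_i$ is governed by quadratic residue conditions on the Lagrange products $\prod_{j \ne i}(\alpha_i - \alpha_j)^{-1}$. The hypothesis $q = 2^m$ sidesteps this obstruction entirely and makes the existence automatic.
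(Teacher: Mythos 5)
Your argument is correct: the paper itself offers no proof of this lemma (it is stated as an immediate consequence of the cited \cite[Theorem~3]{grass2008self}), and what you have written is precisely the standard argument behind that result --- combine the GRS duality formula $\genrs_k(\cL,v)^\perp = \genrs_{n-k}(\cL,v^*)$ with $v_i^* = v_i^{-1}\prod_{j\neq i}(\alpha_i-\alpha_j)^{-1}$ with the fact that squaring is a bijection on $\F_{2^m}$, so $v_i=\bigl(\prod_{j\neq i}(\alpha_i-\alpha_j)^{-1}\bigr)^{1/2}$ gives $v^*=v$ and hence a self-dual $[2k,k]$ GRS code for any $2k\leq q$ distinct locators. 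One minor simplification: with $\lambda=1$ you get $v^*=v$ exactly, so $\cC^\perp=\cC$ follows immediately from the duality formula without the final dimension-count step.
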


\begin{lemma}
  Let $q$ be even with $q\geq n$. Then there exists a weakly self-dual $[n,k]$ GRS code $\cC$ for any $k\geq \frac{n}{2}$ and code locators~$\cL$.
\end{lemma}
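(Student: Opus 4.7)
The plan is to construct an explicit weakly self-dual GRS code by tuning the column multiplier so that the dual is a GRS code with the \emph{same} locators and multiplier; the inclusion $\cC^\perp\subseteq\cC$ then reduces to the trivial nesting of GRS codes of increasing dimension. The key feature of even characteristic I will exploit is that the Frobenius map $x\mapsto x^2$ is a bijection on $\F_q$, so every element admits a unique square root.

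Concretely, given distinct locators $\cL=(\cL_1,\ldots,\cL_n)\subseteq\F_q$, I set $\pi_i\=\prod_{j\neq i}(\cL_i-\cL_j)\in\F_q^*$. I then choose $v_i$ to be the unique element satisfying $v_i^2=\pi_i^{-1}$ and form the multiplier $\bm{v}=(v_1,\ldots,v_n)$. The candidate code is $\cC=\GRS{k}{\cL,\bm{v}}$.

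Next I invoke the classical duality formula for GRS codes, $\GRS{k}{\cL,\bm{v}}^\perp=\GRS{n-k}{\cL,\bm{v}'}$ with $v_i'=(v_i\pi_i)^{-1}$. By the choice of $\bm{v}$ we have $v_i^2\pi_i=1$, so $v_i'=v_i$; hence $\cC^\perp=\GRS{n-k}{\cL,\bm{v}}$. Since $k\geq n/2$ gives $n-k\leq k$, the trivial inclusion of nested GRS codes with identical locators and multipliers (polynomials of degree $<n-k$ sit inside those of degree $<k$) yields $\cC^\perp\subseteq\cC$, as required.

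The only mildly technical ingredients are the dual-multiplier formula and the existence of the square root in $\F_q$; both are standard and I would recall them in one line each. I do not foresee a real obstacle: the construction degenerates to the self-dual case of Lemma~\ref{lem:existenceSelfDual} when $k=n/2$, and the hypothesis $k\geq n/2$ is precisely what is needed to absorb $\cC^\perp$ into $\cC$ once the multipliers are aligned.
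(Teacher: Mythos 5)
Your proof is correct, but it takes a genuinely different route from the paper. The paper does not construct anything explicitly: it invokes the existence (from Grassl--Gulliver, restated as Lemma~1) of a self-dual $[n,n/2]$ GRS code $\cC_{[n,n/2]}$ with the given locators, observes that it is a subcode of the $[n,k]$ GRS code $\cC_{[n,k]}$ with the same multipliers, and concludes via the reversal of inclusions under duality, $\cC_{[n,k]}^\perp \subset \cC_{[n,n/2]}^\perp = \cC_{[n,n/2]} \subset \cC_{[n,k]}$. You instead build the multiplier directly: taking $v_i$ to be the square root of $\pi_i^{-1}$ (available uniquely because Frobenius is bijective in characteristic $2$) forces the dual multiplier $(v_i\pi_i)^{-1}$ to equal $v_i$, so $\cC^\perp = \GRS{n-k}{\cL,\bm{v}} \subseteq \GRS{k}{\cL,\bm{v}}$ by degree nesting. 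Your version buys several things: it is self-contained (the $k=n/2$ case is precisely a proof of the cited existence result rather than an appeal to it), it explicitly identifies which column multipliers realize weak self-duality (useful downstream, since Lemma~3 needs the multipliers $\cV_\cS$ of such a code), and it handles odd $n$ without the awkwardness of an $[n,n/2]$ code, which the paper's argument papers over with a remark about puncturing. The paper's version buys brevity and avoids stating the GRS duality formula. Both are valid; just make sure to state the dual-multiplier formula with a precise reference when you write it out.
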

\begin{proof}
  Let $\cC_{[n,n/2]}$ be an $[n,n/2]$ self-dual GRS code with code locators $\mathcal{L}$, as shown to exist in \cite[Theorem~3]{grass2008self} (see Lemma~\ref{lem:existenceSelfDual}). It is easy to see that this code is a subcode of the $[n,k]$ GRS code $\cC_{[n,k]}$ with the same column multipliers. The property $\cC_{[n,k]}^\perp \subset \cC_{[n,k]}$ follows directly from observing that $\cC_{[n,k]}^\perp \subset \cC_{[n,n/2]}^\perp = \cC_{[n,n/2]} \subset \cC_{[n,k]}$ and the fact that puncturing preserves weak duality.
\end{proof}

\begin{lemma}
  Let $q$ be even with $q\geq n$. For any $[n,k]$ GRS code $\cC$ there exists an $[n,t]$ GRS code $\cD$ such that their star-product $\cS = \cC \star \cD$ is an $[n,k+t-1]$ weakly self-dual GRS code.
  \label{lem:SelfDualStarCode}
\end{lemma}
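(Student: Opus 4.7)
The plan is to reduce this to the existence statement for weakly self-dual GRS codes proved in the previous lemma, by exploiting the elementary fact that the star-product of two GRS codes with a common set of code locators is again a GRS code with the same locators and multipliers equal to the star-product of the original multipliers. Concretely, writing $\cC = \GRS_k(\cL, v)$ for some set of code locators $\cL \subset \Fq$ and nonzero column multipliers $v \in (\Fq^*)^n$, the standard identity $\GRS_k(\cL, v) \star \GRS_t(\cL, w) = \GRS_{k+t-1}(\cL, v \star w)$ (valid whenever $k+t-1 \le n$) reduces the problem to producing an appropriate column multiplier vector $w$.

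The first step is therefore to apply the previous lemma (in the regime $k+t-1 \ge n/2$, which is the only interesting regime since otherwise no weakly self-dual $[n,k+t-1]$ code of any kind can exist) to obtain a weakly self-dual GRS code $\cS = \GRS_{k+t-1}(\cL, u)$ with the same code locators $\cL$ as $\cC$ and some nonzero column multipliers $u \in (\Fq^*)^n$. Next, I would define $w \in (\Fq^*)^n$ coordinate-wise by $w_s \coloneqq u_s / v_s$, which is well-defined because $v_s \neq 0$ for all $s$. Setting $\cD \coloneqq \GRS_t(\cL, w)$, the star-product identity recalled above gives
\[
\cC \star \cD = \GRS_{k+t-1}(\cL, v \star w) = \GRS_{k+t-1}(\cL, u) = \cS,
\]
which is weakly self-dual by construction. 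This $\cD$ is an $[n,t]$ GRS code, so the statement follows.

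There is no real obstacle in the proof; the only subtle point is that the construction is not \emph{canonical} in the starting code $\cC$: the multipliers $u$ furnished by the previous lemma are not unique, and the freedom to choose them (together with the freedom in $w$) is exactly what enables matching $\cC$ to a prescribed weakly self-dual star-product code. I would also remark briefly that the hypothesis $q \ge n$ is inherited from the previous lemma and is needed merely to guarantee enough distinct code locators, while $q$ even is needed for the existence of the self-dual $[n,n/2]$ GRS subcode that underpins that lemma.
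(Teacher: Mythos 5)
Your proposal is correct and follows essentially the same route as the paper: both invoke the fact that the star product of GRS codes with common locators is the GRS code with the element-wise product of the column multipliers, take the multipliers of a weakly self-dual $[n,k+t-1]$ GRS code on the same locators from the preceding existence lemma, and define $\cD$ via element-wise division of multipliers. The only (immaterial) difference is that you explicitly note the implicit restriction $k+t-1\ge n/2$ and cite the weakly-self-dual existence lemma rather than the self-dual one, which is if anything slightly more precise than the paper's citation.
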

\begin{proof}
  By \cite{mirandola2015critical} the star product between an $[n,k]$ GRS code $\cC$ with column multipliers $\cV_\cC$ and an $[n,t]$ GRS code $\cD$ with column multipliers $\cV_\cD$, both with the same locators $\cL$, is the $[n,k+t-1]$ GRS code with column multipliers $\cV_\cC\star \cV_\cD$ and code locators $\cL$. Denote by $\cV_\cS$ the column multipliers of a weakly-self dual $[n,k+t-1]$ GRS code with code locators $\cL$, which exists due to Lemma~\ref{lem:existenceSelfDual}. Then, the lemma statement follows from setting $\cV_\cD = (\cV_\cC)^{-1}\star \cV_\cS$, where we denote by $(\cV_\cC)^{-1}$ the element-wise inverse of $\cV_\cC$.
\end{proof}


\begin{figure*}[ht]
\centering
\begin{tikzpicture}[thick,scale=0.9, every node/.style={transform shape}]
\path
(4.35,0) node{
	$\begin{pmatrix}[ccc|ccc]
	    X_{1,1}^{1,1}  & \cdots & X_{1,k}^{1,1} & X_{2,1}^{1,1}  & \cdots & X_{2,k}^{1,1}  \\
	    \vdots & \ddots & \vdots & \vdots & \ddots & \vdots \\
	    X_{1,1}^{1,\b} & \cdots & X_{1,k}^{1,\b} & X_{2,1}^{1,\b} & \cdots & X_{2,k}^{1,\b} \\ \hline
	    \vdots & \vdots & \vdots & \vdots & \vdots & \vdots \\ \hline
	    X_{1,1}^{m,1}  & \cdots & X_{1,k}^{m,1} & X_{2,1}^{m,1}  & \cdots & X_{2,k}^{m,1} \\
	    \vdots & \ddots & \vdots & \vdots & \ddots & \vdots \\
	    X_{1,1}^{m,\b} & \cdots & X_{1,k}^{m,\b} & X_{2,1}^{m,\b} & \cdots & X_{2,k}^{m,\b}
	\end{pmatrix} \quad \cdot G_\cC = \quad
	\begin{pmatrix}[ccc|ccc]
	    Y_{1,1}^{1,1}  & \cdots & Y_{1,n}^{1,1} & Y_{2,1}^{1,1}  & \cdots & Y_{2,n}^{1,1}  \\
	    \vdots & \ddots & \vdots & \vdots & \ddots & \vdots \\
	    Y_{1,1}^{1,\b} & \cdots & Y_{1,n}^{1,\b} & Y_{2,1}^{1,\b} & \cdots & Y_{2,n}^{1,\b} \\ \hline
	    \vdots & \vdots & \vdots & \vdots & \vdots & \vdots \\ \hline
	    Y_{1,1}^{m,1}  & \cdots & Y_{1,n}^{m,1} & Y_{2,1}^{m,1}  & \cdots & Y_{2,n}^{m,1} \\
	    \vdots & \ddots & \vdots & \vdots & \ddots & \vdots \\
	    Y_{1,1}^{m,\b} & \cdots & Y_{1,n}^{m,\b} & Y_{2,1}^{m,\b} & \cdots & Y_{2,n}^{m,\b}
	\end{pmatrix}$
}
(-3.8,1.1) node[blue]{file 1}
(-3.8,-1) node[blue]{file $m$}
(6.1,-2.3) node[orange]{\small $\text{\rmfamily\scshape server}\, 1$}
(7.92,-2.33) node[orange]{\small $\text{\rmfamily\scshape server}\, n$}
(9.38,-2.3) node[orange]{\small $\text{\rmfamily\scshape server}\, 1$}
(11.2,-2.33) node[orange]{\small $\text{\rmfamily\scshape server}\, n$}
;
\draw[thin,blue,rounded corners=4pt] (-2.9,0.25) rectangle (3.15,2.05);
\draw[thin,blue,rounded corners=4pt] (-2.9,-.3) rectangle (3.15,-2);
\draw[thin,orange,rounded corners=4pt] (5.7,-2) rectangle (6.5,2.05);
\draw[thin,orange,rounded corners=4pt] (7.62,-2) rectangle (8.42,2.05);
\draw[thin,orange,rounded corners=4pt] (8.77,-2) rectangle (9.57,2.05);
\draw[thin,orange,rounded corners=4pt] (10.8,-2) rectangle (11.6,2.05);
\end{tikzpicture}
\caption{Illustration of a DSS storing $m$ files, each consisting of $2 \b k$ symbols. The matrix $G_\cC$ is a generator matrix of a $[2n,2k]$ code $\cC$.}
\label{fig:DSS}
\end{figure*}
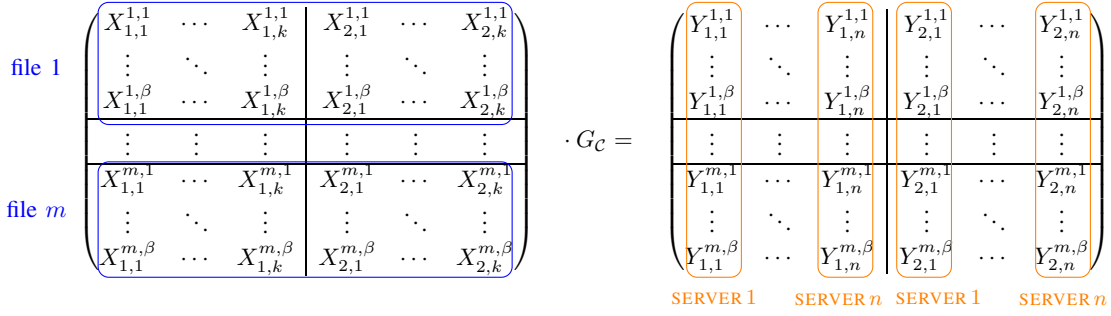 

\textbf{Quantum Computation.} In this section we collect some notions from non-binary stabilizer formalism~\cite{AK01,KKKS06}. For general notions in quantum computation we refer the reader to~\cite{NC00}.

Let $q = p^k$ be a prime power and fix $n \in \N$. A quantum system is a $q$-dimensional Hilbert space $\H$ along with a computational basis, that is, a prespecified orthonormal basis $\mathcal{B} = \gpar{\cz,\co,\ldots,|q-1\rangle}$. One typically takes $\H = \C^q$. We will identify the field $\Fq$ with $\F_p^k$ in the usual way. Denote ${\rm tr} : \Fq \rightarrow \F_p, x \mapsto \sum_{i = 0}^{k-1}x^{q^i}$ the corresponding trace function. 
Let $\omega = \exp(2\pi i/p)$ be a $p$-th primitive root of unity. For $a,b\in \Fq$, the maps $\sX(a)|x\rangle = |x+a\rangle \text{ and } \sZ(b)|x\rangle = \omega^{{\rm tr}(bx)}|x\rangle$ are unitary operations on the Hilbert space $\H$. For $c = (c_1,\ldots,c_n)\in \Fq^n$, we extend these maps to unitary transformations of $\C^{q^n}\cong (\C^q)^{\otimes n} = \H^{\otimes n}$ as 
\begin{equation*}
    \sX(c) = \sX(c_1)\otimes\cdots\otimes \sX(c_n) \text{ and } \sZ(c) = \sZ(c_1)\otimes\cdots\otimes \sZ(c_n).
\end{equation*}
A Weyl operator is then defined as $\sW(a,b) = \sX(a)\sZ(b)$, and the Heisenberg-Weyl group ${\rm HW}_{q^n}$ is the subgroup of the unitary group $\mathbb{U}(q^n)$ generated by these operators. A~\emph{stabilizer group} is an abelian subgroup $S \leq {\rm HW}_{q^n}$ such that $-I_{q^n} \notin S$. There is a well-known one-to-one correspondence between stabilizer groups and weakly self-dual subspaces of $\Fq^{2n}$ with respect to the symplectic inner product
\begin{equation}\label{inners}
    \langle (a,b)\mid (c,d) \rangle_J := {\rm tr}\tpar{(a,b) J (c,d)^\top},
\end{equation}
where $J = \ppmatrix{0^{n \times n} & -I_n \\ I_n  & 0^{n \times n}} \in \Fq^{2n \times 2n},\ a,b,c,d \in \Fq^n$. We will denote the dual with respect to~\eqref{inners} of a subspace $\tV\leq \Fq^{2n}$  by $\tV^{\perp_J}$. Based on the above mentioned correspondence, we will identify a stabilizer group $S$ as $S(\tV)$ for some unique $\tV \leq \Fq^{2n}$ with $\tV \subseteq \tV^{\perp_J}$.

Given a stabilizer group $S = S(\tV)$, we have that $\omega^{{\rm tr}(v\cdot w)}$ is an eigenvalue of ${\sf E}(v) \in S$, for any $w \in \tV$, and all its eigenvalues are of this form. 
Let $\mathcal{H}_w$ be the common eigenspace of the operators ${\sf E}(v)$ corresponding to the eigenvalue associated to $w$, and let $P_w: \C^{q^n}\rightarrow \mathcal{H}_w$ be the correspond projector.
It is shown in~\cite[Sec.~III.A]{song2020} that $\cB^\tV = \{P_v \mid v \in \tV\}$ is a projection-valued measurement (PVM), which we will measure with. 
We point out here isomorphishms $\tV \cong {\rm Hom}(\tV,\Fq) \cong \Fq^{2n}/\tV^{\perp_J}$ and for us it will be beneficial to index the projections with cosets $\overline{w} \in \Fq^{2n}/\tV^{\perp_J}$.


\textbf{Private Information Retrieval.} Consider a storage system storing $m$ files $X^i, \ i\in[m]$, as described above.

In a PIR protocol a user desiring the $K$-th file $X^K$ chooses a query $Q^K=\{Q_1^K,\ldots,Q_n^K\}$ from a query space $\mathcal{Q}$ and transmits $Q_s^K$ to the $s$-th servers.
In the non-quantum PIR setting the response $A_s^K$ from the $s$-th server is a deterministic function of the received query $Q_s^K$ and the shares of the (encoded) files it stores. We denote by $A^K = \{A_1^K,\ldots, A_n^K\}$ the set of responses from all servers. In this work, we consider an extended setting where the user and the servers are also allowed to communicate quantum systems. Briefly, in this QPIR setting, we have $n$ servers each possessing a $q$-dimensional quantum system. Their composite quantum system is initialized in a specific entangled state. Each server applies some standard quantum operations to its quantum systems (\emph{e.g.}, applying a Weyl operator on a quantum system) depending on (a function of) the received query and the shares of the (encoded) files it stores, and \emph{responds} by sending the remaining quantum systems to the user. The total number of quantum systems that the servers prepare at the beginning of the protocol is denoted by $\q{in}$, while the total number of quantum systems that are transmitted from the servers to the user is denoted by $\q{out}$. In this work, we have $\q{in} = \q{out}$.


\begin{definition}[Correctness]\label{def:correctness}
A QPIR protocol is said to be \emph{correct} if the user can retrieve the desired file $X^K, K\in [m]$ from the responses of the servers.
\end{definition}

As usual, we assume honest-but-curious servers who follow the assigned protocol, but might try to determine the index $K$ of the file desired by the user.
\begin{definition}[Privacy with $t$-Collusion]\label{def:privacy}
  \emph{User privacy:} Any set of at most $t$ colluding servers gains no information about the index $K$ of the desired file.\\
  \emph{Server privacy:} The user does not gain any information about the files other than the requested one.\\
  \emph{Symmetric scheme:} A scheme with both user and server privacy is called  \emph{symmetric}.
\end{definition}

Formally, the QPIR rate in this setting is defined in the following. As customary, we assume that the size of the query vectors is negligible compared to the size of the files. This is well justified if the files are assumed to be large, as the upload cost is independent of the size of the files. For simplicity, we only consider files of sizes $2 \b k \log_2(q)$ in the following. However, note that repeatedly applying the scheme with the same queries allows for the download of files that are any multiple of $2 \b k \log_2(q)$ in size at the same rate and without additional upload cost.
\begin{definition}[QPIR Rate]\label{def:QPIRrate}
  For a QPIR scheme, \ie a PIR scheme with classical files, classical queries from user to servers and quantum responses from servers to user, the \emph{rate} is the number of retrieved information bits of the requested file over the binary logarithm of the dimension of the composite quantum system, \emph{i.e.},
  \begin{equation*}
    R_{\mathsf{QPIR}} = 
    \frac{\text{\#information bits in a file}}{\log_2(\dim{\H^{\otimes n}})}.
  \end{equation*}
\end{definition}

For comparison, we also informally define the PIR rate in the non-quantum setting as the number of retrieved information bits of the requested file per downloaded response bit, \emph{i.e.},
\begin{equation*}
R_{\mathsf{PIR}} = 
\frac{\text{\#information bits in a file}}{\text{\#downloaded bits}}.
\end{equation*}
The PIR capacity is the supremum of PIR rates of all possible PIR schemes, for a fixed parameter setting.

\begin{remark}
\label{rem:Qubit capacity}
In this setting we assume that the user does not share any entanglement with the servers. Hence, the maximal number of information bits obtained when receiving a quantum system, \emph{i.e.}, the number of bits that can be communicated by transmitting a quantum system from a server to the user without privacy considerations, is the binary logarithm of the dimension of the corresponding Hilbert space~\cite{holevo1973bounds}. 
\end{remark}

We would also like to point out that higher-dimensional quantum systems are mainly of theoretical interest. If we restrict to two-dimensional systems while still wishing to protect against collusion, the MDS property should be relaxed in order to allow for binary storage codes. This will likely lower the achievable QPIR rate  but make the scheme otherwise more practical. 

\section{$[n,k]$-coded storage with $t$-collusion}

\textbf{Storage.} We consider a storage system as described in Section~\ref{sec:preliminaries} (see Figure~\ref{fig:DSS}). The code $\cC'$ is chosen to be an $[n,k]$ GRS code and for a given integer $c$, which will be defined in the next paragraph, the parameter $\b$ is fixed to $\b = \lcm(c,k)/k$. 

\textbf{Codes.} 
Let $t$ be the collusion parameter with $\frac{n}{2} \leq k+t-1 < n$. By Lemma~\ref{lem:SelfDualStarCode} there exists an $[n,t]$ GRS code $\cD'$ such that $\cS' = \cC' \star \cD'$ is an $[n,k+t-1]$ weakly self-dual GRS code. We define the query code as the Cartesian product $\cD = \cD' \times \cD'$. Thus, for a generator matrix $G_{\cD'}$ of $\cD'$, the matrix $G_\cD = \diag{G_{\cD'},G_{\cD'}} \in \Fq^{2t \times 2n}$ is a generator matrix of $\cD$.

Define $\cS = \cC \star \cD$ and $\cS' = \cC' \star \cD'$ . By \eqref{eq:starCartesian} we have $\cS = \cC \star \cD = \cS' \times \cS'$, so $\cS$ is the Cartesian product of two star product codes. Define $c=d_{\cS'}-1$, where $d_{\cS'} = n-k-t+2$ is the minimum distance of $\cS'$.

Let $H_{\cS'} \in \Fq^{(n - k - t + 1) \times n}$ be a parity-check matrix of $\cS'$. By Definition~\ref{def:WeaklySelfDualCode}, the code $\cS'$ has a generator matrix of the form $G_{\cS'} = (H_{\cS'}^\top \ \  F_{\cS'}^\top)^\top$ for some $F_{\cS'} \in \Fq^{2(k + t - 1 - n) \times n}$. Hence, $\cS$ has a generator matrix of form
\begin{equation}
    \label{eq:StarGenMatrix}
    G_\cS = \ppmatrix{\diag{H_{\cS'},H_{\cS'}} \\ \diag{F_{\cS'},F_{\cS'}}} \in \Fq^{2(k+t-1) \times 2n}.
\end{equation}

\begin{lemma}
\label{lem:MeasurementMatrix}
Let $G_\cS$ be the matrix defined in Eq.~\eqref{eq:StarGenMatrix} and let $H_\cS$ be the submatrix of $G_\cS$ containing its first $2(n-k-t+1)$ rows. Let $w_1,\ldots,w_{2n}$ be the column vectors of $G_\cS$. Then, they satisfy conditions (a) and (b) of \cite[, Lemma 2]{song2020}, \ie
\begin{itemize}
    \item[(a)] $w_{\pi(1)},\ldots,w_{\pi(k+t-1)},w_{\pi(1) + n},\ldots,w_{\pi(k+t-1) + n}$ are linearly independent for any permutation $\pi \in \mathsf{S}_n$.

    \item[(b)] $H_\cS J^\top G_\cS^\top = 0$.
\end{itemize}
\end{lemma}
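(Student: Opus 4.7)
I would split the argument into the two claims and attack them with different tools: the MDS property of $\cS'$ for (a), and a direct block computation combined with weak self-duality for (b).

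For (a), the key remark is that $G_\cS$ from \eqref{eq:StarGenMatrix} differs from $\diag{G_{\cS'},G_{\cS'}}$ only by a permutation of its rows, since \eqref{eq:StarGenMatrix} is obtained by stacking both copies of $H_{\cS'}$ first and both copies of $F_{\cS'}$ afterwards. Because row permutations preserve column dependencies, it is enough to test linear independence in the block-diagonal form. There, the columns $w_{\pi(1)},\dots,w_{\pi(k+t-1)}$ live entirely in the top half and are equal to the corresponding columns of $G_{\cS'}$, while $w_{\pi(1)+n},\dots,w_{\pi(k+t-1)+n}$ live entirely in the bottom half and are again equal to the corresponding columns of $G_{\cS'}$. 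The zero pattern decouples the two halves, so linear independence of the $2(k+t-1)$ vectors is equivalent to linear independence of the $k+t-1$ columns of $G_{\cS'}$ indexed by $\pi(1),\dots,\pi(k+t-1)$. Since $\cS'$ is an $[n,k+t-1]$ GRS code and thus MDS, any such choice of columns is linearly independent, which proves (a).

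For (b), I would proceed by direct block-matrix multiplication. Using $J^\top = \ppmatrix{0 & I_n \\ -I_n & 0}$, $H_\cS = \diag{H_{\cS'},H_{\cS'}}$, and the block form of $G_\cS^\top$ read off from \eqref{eq:StarGenMatrix}, each of the four block entries of $H_\cS J^\top G_\cS^\top$ reduces, up to sign, to a submatrix made of blocks of the form $H_{\cS'} H_{\cS'}^\top$ or $H_{\cS'} F_{\cS'}^\top$. Both of these vanish: the rows of $F_{\cS'}$ are rows of the generator matrix $G_{\cS'}$, hence codewords of $\cS'$, so they are annihilated by the parity-check matrix $H_{\cS'}$, giving $H_{\cS'} F_{\cS'}^\top = 0$; and the rows of $H_{\cS'}$ generate $\cS'^\perp$, which is contained in $\cS'$ by weak self-duality (Definition~\ref{def:WeaklySelfDualCode}), so they are also codewords of $\cS'$ and similarly annihilated, yielding $H_{\cS'} H_{\cS'}^\top = 0$.

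I do not expect any genuine obstacle; the argument is essentially bookkeeping over linear algebra. The only slightly delicate steps are making the row reordering that identifies $G_\cS$ with $\diag{G_{\cS'},G_{\cS'}}$ in (a) explicit enough to justify the reduction, and carefully tracking the sign and block indices in the block multiplication of (b). Both become routine once the block structures of $G_\cS$, $H_\cS$, and $J^\top$ are written out side by side.
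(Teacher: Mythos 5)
Your proof is correct and follows essentially the same route as the paper: (a) via the MDS property of the $[n,k+t-1]$ GRS code $\cS'$ together with the block-diagonal support decoupling, and (b) via the block computation with $J^\top$ that reduces everything to $H_{\cS'}G_{\cS'}^\top=(H_{\cS'}H_{\cS'}^\top\ \ H_{\cS'}F_{\cS'}^\top)=0$. Your explicit appeal to weak self-duality for $H_{\cS'}H_{\cS'}^\top=0$ is equivalent to the paper's one-line observation that $HG^\top=0$, since $H_{\cS'}$ is by construction a submatrix of $G_{\cS'}$.
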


\begin{proof}
It is well-known that any subset of $k+t-1$ columns of the generator matrix of an $[n,k+t-1]$ MDS code are linearly independent. Hence, the columns $w_{\pi(1)},\ldots,w_{\pi(k+t-1)}$ are linearly independent, as the first $n$ columns of $G_\cS$ generate $\cS$. The same holds for $w_{\pi(1) + n},\ldots,w_{\pi(k+t-1) + n}$. Trivially, any non-zero columns of a diagonal matrix are linearly independent and property (a) follows.

Property (b) follows directly from observing that, by definition, $HG^\top = 0$ for any linear code with generator matrix $G$ and parity-check matrix $H$.
\todoLater{Now, we want to prove property (b). Since for any linear code $\mathcal{L}$ with generator matrix $G$ and parity-check matrix $H$ it holds that $HG^\top = 0$, then
\[ \begin{split}
    H_\cS J^\top G_\cS^\top & =
    \ppmatrix{0 & H_{\cS'} \\ -H_{\cS'} & 0}
    \ppmatrix{G_{\cS'}^\top & 0 \\ 0 & G_{\cS'}^\top} \\
    & = \ppmatrix{0 & H_{\cS'} G_{\cS'}^\top \\ -H_{\cS'} G_{\cS'}^\top & 0} = 0.
\end{split} \]}
\end{proof}

\textbf{Targeting servers.} 
Suppose the desired file is $X^K$. We define the indexing such that the file can be obtained in $\rho = \lcm(c,k)/c$ rounds. During each of these rounds, the user can download $2c/\b=2k/\rho$ symbols from each of the $\b$ rows of $Y^K$, where the factor 2 is achieved by utilizing the properties of superdense coding~\cite{NC00} in quantum computation. 

Fix $\cJ = \gpar{1,\ldots,\max \gpar{c,k}}$ to be the set of server indices from which the user obtains the symbols of $Y^K$. We define $\cJ_r^b\subseteq \cJ$ with $|\cJ_r^b| = c/\b$
as in \cite[Eq. (22)]{freij2017private}, where $r \in [\rho]$ and $b \in [\b]$, and denote $\cJ_r = \cJ_r^1 \cup \ldots \cup \cJ_r^\b$. This definition ensures that during the $r$-th iteration the user obtains the symbols $(Y_{1,a}^{K,b},Y_{2,a}^{K,b})$ for every $a \in \cJ_r^b$ and $b\in [\b]$.

We define
\begin{equation}
    \label{eq:MatrixN}
 N^{(r)} = \ppmatrix{e_{a}^{n}}_{a \in \cJ_{r}}^\top \in \Fq^{c \times n}.
\end{equation}
Then, the matrix $(G_\cS^\top \ \ (M^{(r)})^\top)^\top$, with $M^{(r)} = \diag{N^{(r)},N^{(r)}} \in \Fq^{2c \times 2n}$, is a basis for $\Fq^{2n}$. To see that this is in fact a basis observe that the row span of $N^{(r)}$, by definition, contains vectors of weight at most $c$. The span of $G_{\cS'}$ contains vectors of weight at least $ d_{\cS'}=c+1$. It follows that the spans of $N^{(r)}$ and $G_{\cS'}$ intersect trivially, which implies that their ranks add up.

\todoLater{
\lh{Reformulated, see above.}
Suppose the desired file is $X^K$. During each round of the scheme, we can download $2g = 2 k/ \rho = 2 c / \b$ symbols from every row of $Y^K$, where the factor 2 is essentially obtained by the properties of superdense coding~\cite{NC00} in quantum computation. After $\rho$ rounds, the scheme will have downloaded $2 \rho g = 2k$ \lh{What is $g$?} \ma{It's defined in the previous sentence :)} symbols of the $b$-th row $Y^{K,b}$ of $Y^K$ for all $b \in [\b]$. 

We fix a subset $\cJ = \gpar{1,\ldots,\max \gpar{c,k}} \subseteq [n]$ of servers which stays constant throughout the scheme. The set $\cJ$ will be the set of all servers from which we obtain encoded symbols. We will also make use of sets $\cJ_r^b \subseteq \cJ$~\cite[Eq. (22)]{freij2017private} where $r \in [\rho]$ and $b \in [\b]$, which are defined so that during the $r$-th iteration we obtain the symbols $(Y_{1,s}^{K,b},Y_{2,s}^{K,b})$ from every server $s \in \cJ_r^b$.

We now build matrices $N^{(1)},\ldots,N^{(\rho)}$ such that the row vectors of the matrix $(G_{\cS_1}^\top \ \ (N^{(r)})^\top)^\top$ form a basis for $\Fq^n$ for any $r \in [\rho]$. In the following, we fix the iteration $r \in [\rho]$. Let $\cJ_r^b = \gpar{\cJ_{r,1}^b,\ldots,\cJ_{r,c}^b}$ be defined as in \cite[Eq. (27)]{freij2017private} and let $\cJ_r = \cJ_r^1 \cup \ldots \cup \cJ_r^\b$ be the set of servers we are targeting. We define
\begin{equation}
    \label{eq:MatrixN}
    N^{(r)} = \ppmatrix{e_{\cJ_{r,1}}^{n} & \ldots & e_{\cJ_{r,c}}^{n}}^\top \in \Fq^{c \times n}.
\end{equation}
Hence, the matrix $M^{(r)} = \diag{N^{(r)},N^{(r)}} \in \Fq^{2c \times 2n}$ is such that the row vectors of $(G_\cS^\top \ \ (M^{(r)})^\top)^\top$ form a basis for $\Fq^{2n}$. To see that this is in fact a basis observe that the row span of $N^{(r)}$, by definition, contains vectors of weight at most $c$ and, since $d_{\cS'}=c+1$, the span of $G_{\cS'}$ contains vectors of weight at least $c+1$. It follows that the spans of $N^{(r)}$ and $G_{\cS'}$ intersect trivially, which implies that their ranks add up.}

\subsection{A coded QPIR scheme} 
\label{sec:scheme}

Let $\tV$ be the space spanned by the first $2(n-k-t+1)$ rows of $G_\cS$ and $\Fq^{2n} / \tV^{\perp_J} = \gpar{\overline{w} = w + \tV^{\perp_J} : w \in \langle M^{(r)}\rangle_{\mathsf{row}}}$, where $\langle M^{(r)}\rangle_{\mathsf{row}}$ is the space spanned by the rows of $M^{(r)}$. By Lemma~\ref{lem:MeasurementMatrix}, the rows of $G_\cS$ span the space $\tV^{\perp_J}$. 

We now describe the five steps of our QPIR scheme. The first four steps are repeated in each round $r \in [\rho]$.

\textbf{Distribution of entangled state.} Let $\H_1,\ldots,\H_n$ be $q$-dimensional quantum systems and $\smix = q^{n-2(k+t-1)} \cdot I_{q^{2(k+t-1)-n}}$. 
By \cite[Eq.~(18)]{song2020} the composite quantum system $\H = \H_1 \otimes \dots \otimes \H_n$ is decomposed as $\H = \cW \otimes \C^{q^{2(k+t-1)-n}}$, where $\cW = \Span{\cb{\overline{w}} \mid \overline{w} \in \Fq^{2n} / \tV^{\perp_J}}$. The state of $\H$ is initialized as $\cb{\overline{0}} \rb{\overline{0}} \otimes \smix$ and distributed such that server $s \in [n]$ obtains $\H_s$.

\textbf{Query.} The user chooses a matrix $Z^{(r)} \in \Fq^{m\b \times 2t}$ uniformly at random. 
We define $E_{(K)} \in \Fq^{m\b \times 2c}$ with $E_{(K),p,a} = e_{(K,a)}^{m\b},\ p\in[2], a\in[c]$. 
Notice that the row in coordinate $(i,b)$ of the product $E_{(K)} \cdot M^{(r)}$ is $\sum_{p=1}^2 \sum_{a \in \cJ_{r}^b} \d_{i,K} (e_{(p,a)}^{2n})^\top$. We denote by $Q^{(r)} \in \Fq^{m\b \times 2n}$ the matrix of all the queries, which are computed as
\begin{equation}
    Q^{(r)} = (Z^{(r)} \ \ E_{(K)}) \cdot \ppmatrix{G_\cD \\ M^{(r)}} = Z^{(r)} \cdot G_\cD + E_{(K)} \cdot M^{(r)}.
    \label{eq:Queries}
\end{equation}
Each server $s \in [n]$ receives two vectors $Q^{(r)}_{1,s},Q^{(r)}_{2,s} \in \Fq^{m\b}$. 

\textbf{Response.} The servers compute the dot product of each column of their stored symbols and the respective column of the queries received, \ie they compute the response $A^{(r)}_{p,s} = Y_{p,s}^\top \cdot Q^{(r)}_{p,s},\ s \in [n], p \in [2]$. Each $A^{(r)}_{p,s}$ is a symbol in $\Fq$. Server $s$ applies $\sX(A^{(r)}_{1,s})$ and $\sZ(A^{(r)}_{2,s})$ to its quantum system and sends it to the user.

\textbf{Measurement.} The user applies the PVM $\cB^\tV = \gpar{P_{\overline{w}} \mid \overline{w} \in \Fq^{2n} / \tV^{\perp_J}}$ on $\H$ and obtains the output $o^{(r)} \in \Fq^{2c}$. 

\textbf{Retrieval.} Finally, after $\rho$ rounds the user has retrieved $2\rho c = 2 \b k$ symbols of $\Fq$ from which he can recover the desired file $X^K$. 

\subsection{Properties of the coded QPIR scheme} \label{sec:Properties of coded QPIR}

\begin{lemma}
The scheme of Section~\ref{sec:scheme} is correct, \emph{i.e.}, fulfills Definition~\ref{def:correctness}.
\end{lemma}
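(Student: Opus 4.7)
The plan is to trace the information flow through a single round $r$ and then combine the recovered symbols across $\rho$ rounds. First, I would compute the full response vector $A^{(r)} \in \Fq^{2n}$. For each side $p \in [2]$, summing $A^{(r)}_{p,s} = Y_{p,s}^\top Q^{(r)}_{p,s}$ over servers gives
\[
A^{(r)}_p \;=\; \sum_{i \in [m], \, b \in [\b]} Y^{i,b}_{p,:} \star Q^{(r),i,b}_{p,:}\,,
\]
where $Y^{i,b}_{p,:} \in \cC'$ and, by \eqref{eq:Queries}, $Q^{(r),i,b}_{p,:}$ splits as a codeword of $\cD'$ coming from $Z^{(r)} G_\cD$ plus a sparse info-targeting vector coming from $E_{(K)} M^{(r)}$. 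Summing the $\cD'$-parts over $(i,b)$ lands in $\cC' \star \cD' = \cS'$; combining both sides yields, via \eqref{eq:starCartesian}, a contribution in $\cS = \cS' \times \cS'$. By the structure of $E_{(K)}$, the info-contribution vanishes whenever $i \neq K$, and otherwise deposits the symbol $Y^{K,b}_{p,a}$ at each position $(p,a)$ with $a \in \cJ_r^b$. I therefore write $A^{(r)} = c^{(r)} + v^{(r)}$ with $c^{(r)} \in \cS = \tV^{\perp_J}$ (the equality by Lemma~\ref{lem:MeasurementMatrix}(b)) and $v^{(r)}$ in the row span of $M^{(r)}$, packaging the $2c$ target symbols.

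Next, I would invoke the stabilizer protocol of \cite[Sec.~III.A]{song2020}. The collective action on the shared state is $\sW(A^{(r)}) = \bigotimes_{s \in [n]} \sX(A^{(r)}_{1,s}) \sZ(A^{(r)}_{2,s})$. The delicate point is to justify that, under the decomposition $\H = \cW \otimes \C^{q^{2(k+t-1)-n}}$, this Weyl operator acts on the $\cW$-factor as a coset shift by $\overline{A^{(r)}} \in \Fq^{2n}/\tV^{\perp_J}$, while leaving $\smix$ invariant by unitary conjugation; this is exactly the content of \cite[Lem.~2]{song2020}, whose hypotheses (a) and (b) are precisely those established in Lemma~\ref{lem:MeasurementMatrix}. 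Granting this, the post-response state is $\cb{\overline{A^{(r)}}}\rb{\overline{A^{(r)}}} \otimes \smix$, and since $c^{(r)} \in \tV^{\perp_J}$ this coset coincides with $\overline{v^{(r)}}$. The PVM $\cB^\tV$ then outputs this coset deterministically, and because the rows of $M^{(r)}$ form a complement to $\tV^{\perp_J}$ (as shown in the paragraph following \eqref{eq:MatrixN}), the output $o^{(r)} \in \Fq^{2c}$ uniquely identifies the coefficients of $v^{(r)}$ in this basis, i.e., all $2c$ targeted symbols $\gpar{Y^{K,b}_{p,a} : p \in [2], b \in [\b], a \in \cJ_r^b}$.

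Finally, iterating over $r \in [\rho]$, the family $\gpar{\cJ_r^b}_{r \in [\rho]}$ is constructed, as in the classical star-product scheme of \cite{freij2017private}, to cover $k$ distinct server indices for every fixed $(p,b) \in [2] \times [\b]$. Since $\cC'$ is an $[n,k]$ MDS code and $Y^{K,b}_p = X^{K,b}_p \cdot G_{\cC'}$, any such $k$ symbols suffice to decode $X^{K,b}_p$, and performing this decoding for each $(p,b)$ recovers the desired file $X^K$.
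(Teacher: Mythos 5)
Your proof is correct and follows essentially the same route as the paper's: decompose $A^{(r)}$ into a component in $\cS=\tV^{\perp_J}$ plus an information part supported on the rows of $M^{(r)}$, invoke the stabilizer results of Song \emph{et al.} (via Lemma~\ref{lem:MeasurementMatrix}) to conclude the PVM returns the coset $\overline{A^{(r)}}$ exactly, and combine the $2c$ symbols per round over $\rho$ rounds. You are in fact somewhat more explicit than the paper on the final step (MDS decoding of the $k$ recovered coordinates per block rather than just ``solving a system of linear equations''), but this is the same argument.
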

\begin{proof}
Let us fix $r \in [\rho]$. By \cite[Lemma 1]{song2020} the state after the servers' encoding is
\[
\sW(A^{(r)}) (\cb{\overline{0}} \rb{\overline{0}} \otimes \smix) \sW(A^{(r)})^\dagger = \cb{\overline{A^{(r)}}} \rb{\overline{A^{(r)}}} \otimes \smix.  
\]

We observe that $\tV^{\perp_J} = \cS$ since both spaces are spanned by the rows of $G_\cS$. By definition of the star product scheme, the response vector is
\begin{equation}
    \label{eq:Response}
    \begin{split}
        A^{(r)} = & \;
        \begin{pmatrix}[c|c]
            A^{(r)}_{1} & A^{(r)}_{2}
        \end{pmatrix} = \sum_{i=1}^m \sum_{b=1}^{\b} Y^{i,b} \star Q^{(r),i,b} \\
        = & \; \sum_{i=1}^m \sum_{b=1}^{\b} \tpar{X^{i,b} \cdot G_\cC} \star \tpar{Z^{(r),i,b} \cdot G_\cD} \\
        & + \sum_{i=1}^m \sum_{b=1}^{\b} Y^{i,b} \star \Big(\sum_{a \in \cJ_r^b} \d_{i,K} \big(e_{(1,a)}^{2n} + e_{(2,a)}^{2n} \big)^\top \Big)\\
        & \; \in \cS + \sum_{b=1}^\b \sum_{a \in \cJ_r^b} \big(Y_{1,a}^{K,b} e_{(1,a)}^{2n} + Y_{2,a}^{K,b} e_{(2,a)}^{2n}\big)^\top \\
        & \hspace{10pt} = \tV^{\perp_J} +
        \begin{pmatrix}[c|c]
            Y_{1,a}^{K,b} & Y_{2,a}^{K,b}
        \end{pmatrix}_{a \in \cJ_r^b, b \in [\b]} \cdot M^{(r)}.
    \end{split}
\end{equation}
The random part is encoded into a vector in $\tV^{\perp_J}$ while the vector $\big( Y_{1,a}^{K,b} \ | \ Y_{2,a}^{K,b} \big)_{a \in \cJ_r^b, b \in [\b]} \in \Fq^{2c}$ is encoded with $M^{(r)}$ and hence independent of the representative of $\overline{o^{(r)}}$. Therefore, the user obtains the latter without error after measuring the quantum systems with the PVM $\mathcal{B}^\tV$. After $\rho$ rounds the user retrieved the symbols $\big( Y_{1,\k}^{K,b} \ | \ Y_{2,\k}^{K,b} \big)_{\k \in [k]} \in \Fq^{2k}$ for each $b \in [\b]$ and can recover the desired file $X^K$ by solving a system of linear equations.
\end{proof}

\begin{lemma} \label{lem:Secrecy}
The scheme of Section~\ref{sec:scheme} is symmetric and protects against  $t$-collusion in the sense of Definition~\ref{def:privacy}.
\end{lemma}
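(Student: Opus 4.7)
The statement has two parts. I would prove user privacy and server privacy separately, and handle user privacy first since it is the more combinatorial of the two.

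For user privacy, I would fix an arbitrary set $\cT \subseteq [n]$ of $t$ colluding servers and show that the joint distribution of the queries they receive, namely the columns of $Q^{(r)} = Z^{(r)} G_\cD + E_{(K)} M^{(r)}$ indexed by $\{s, n+s : s\in\cT\}$, is uniform over $\Fq^{m\b\times 2t}$ and hence independent of $K$. Because $G_\cD = \diag{G_{\cD'}, G_{\cD'}}$ and $\cD'$ is an $[n,t]$ GRS (thus MDS) code, any $t$ columns of $G_{\cD'}$ form a basis of $\Fq^t$; restricting to columns indexed by $\cT$ in each of the two blocks therefore yields a matrix of full rank $2t$. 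Since $Z^{(r)}$ is chosen uniformly at random in $\Fq^{m\b\times 2t}$, the product $Z^{(r)} G_\cD$ restricted to those columns is uniformly distributed in $\Fq^{m\b\times 2t}$; adding the deterministic term $E_{(K)} M^{(r)}$ restricted to the same columns preserves uniformity. This holds in every round and for any $\cT$, and since the rounds use independent randomness, collusion across rounds also reveals nothing about $K$.

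For server privacy, I would argue that the measurement outcome leaks only the coset $\overline{A^{(r)}} \in \Fq^{2n}/\tV^{\perp_J}$, which is a function of $X^K$ alone. Indeed, by the computation in Eq.~\eqref{eq:Response}, we can write
\[
A^{(r)} \;=\; \underbrace{\sum_{i=1}^m \sum_{b=1}^\b (X^{i,b} G_\cC) \star (Z^{(r),i,b} G_\cD)}_{\in \, \cS \,=\, \tV^{\perp_J}} \;+\; v^{(r)}_K \cdot M^{(r)},
\]
where $v^{(r)}_K$ consists solely of the target symbols $\{(Y^{K,b}_{1,a}, Y^{K,b}_{2,a}) : a\in \cJ_r^b, b\in [\b]\}$ of file $K$. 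Passing to the quotient modulo $\tV^{\perp_J}$ annihilates the first term, so $\overline{A^{(r)}} = \overline{v^{(r)}_K M^{(r)}}$ depends only on those $2c$ symbols of $X^K$. Since the PVM $\cB^\tV = \{P_{\overline{w}}\}$ is indexed by cosets of $\tV^{\perp_J}$, the measurement reveals precisely $\overline{A^{(r)}}$ and nothing finer, and the user can never separate the random mask from individual contributions of other files. Aggregating over all $\rho$ rounds, the user recovers exactly the $2\b k$ symbols of $X^K$ she is entitled to, and learns no information about $X^i$ for $i\neq K$.

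The main subtlety I anticipate is justifying the last step cleanly: one must verify that the measurement genuinely projects onto the coset space, i.e.\ that the outcome distribution conditioned on $X^i$ for $i\neq K$ and on $Z^{(r)}$ depends only on $\overline{A^{(r)}}$. This follows from \cite[Lemma~1]{song2020} applied to the stabilizer $S(\tV)$ associated to the weakly self-dual subspace $\tV$ (which is weakly self-dual by Lemma~\ref{lem:MeasurementMatrix}(b) together with the fact that $\tV \subseteq \tV^{\perp_J}$ since the first $2(n-k-t+1)$ rows of $G_\cS$ lie in $\cS = \tV^{\perp_J}$), so the heavy lifting is already done. Everything else is linear algebra over the GRS structure.
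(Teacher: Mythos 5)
Your proposal is correct and follows essentially the same approach as the paper's (much terser) proof: user privacy via the uniformity of the restricted queries $Z^{(r)}G_\cD + E_{(K)}M^{(r)}$ at any $t$ servers (which you justify in detail through the MDS property of $\cD'$), and server privacy via the fact that the received state depends only on the coset $\overline{A^{(r)}}$ modulo $\tV^{\perp_J}=\cS$, with the quantum-mechanical heavy lifting delegated to \cite{song2020}. Your version merely spells out the linear-algebraic details the paper leaves implicit.
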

\begin{proof}
Privacy in the quantum part of the protocol follows directly from the privacy of the protocol with all but one servers colluding. For details, we refer the reader to \cite{song2020}. 
User privacy is achieved since, for each subset of $t$ servers, the corresponding joint distribution of queries is the uniform distribution over $\Fq^{m\b \times 2t}$. 
For each $r \in [\rho]$, server secrecy is achieved because in every round the received state of the user is independent of $Y^i$ with $i \neq K$.
\end{proof}

Unlike in the classical setting, the servers in the quantum setting do not need access to a source of shared randomness that is hidden from the user to achieve server secrecy. However, this should not be viewed as an inherent advantage since the servers instead share entanglement.

\begin{theorem} \label{thm:rate}
The QPIR rate of the scheme in Section~\ref{sec:scheme} is
\begin{equation*}
    R_{\mathsf{QPIR}} = \frac{2(n - k - t + 1)}{n}
\end{equation*}
\end{theorem}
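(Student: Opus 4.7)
The plan is to directly evaluate the ratio in Definition~\ref{def:QPIRrate} by tracking both (i) the total number of file bits the user can recover with the scheme of Section~\ref{sec:scheme}, and (ii) the total quantum communication from servers to user, summed over the $\rho$ rounds that are needed to complete one execution of the scheme.

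First I would fix the numerator. By construction, each file $X^K$ consists of $2\beta k$ symbols of $\F_q$, hence carries $2\beta k \log_2(q)$ information bits, and as noted in the correctness lemma the user recovers exactly these bits at the end of the $\rho$ rounds, i.e.\ all $2k$ symbols in each of the $\beta$ rows of $Y^K$.

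Next I would evaluate the denominator. In every round each of the $n$ servers transmits a single $q$-dimensional quantum system $\H_s$, so the composite system sent in one round has dimension $q^n$. Since one full execution of the scheme consists of $\rho$ such rounds (with identical queries reused, as pointed out before Definition~\ref{def:QPIRrate}), the total transmitted Hilbert space has dimension $(q^n)^\rho = q^{n\rho}$, giving a denominator of $n\rho\log_2(q)$ bits.

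Finally I would substitute $c=d_{\cS'}-1=n-k-t+1$, $\beta=\lcm(c,k)/k$, and $\rho=\lcm(c,k)/c$, and simplify:
\begin{equation*}
R_{\mathsf{QPIR}}
=\frac{2\beta k\log_2(q)}{n\rho\log_2(q)}
=\frac{2\beta k}{n\rho}
=\frac{2\lcm(c,k)}{n\,\lcm(c,k)/c}
=\frac{2c}{n}
=\frac{2(n-k-t+1)}{n}.
\end{equation*}
The computation is essentially bookkeeping, so there is no real obstacle; the only point that requires care is being explicit that the denominator must account for the quantum traffic across all $\rho$ rounds (matching the $\rho$ rounds in which the numerator's $2\beta k$ symbols are collected), rather than just a single round of the protocol.
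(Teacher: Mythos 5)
Your proposal is correct and follows essentially the same route as the paper's proof: count the $2\beta k\log_2(q)$ retrieved bits against the $\rho n$ downloaded $q$-dimensional systems and use $\beta k=\rho c=\lcm(c,k)$ to simplify to $2c/n=2(n-k-t+1)/n$. No gaps.
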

\begin{proof}
The user downloads $\rho n$ quantum systems while retrieving $2 k \b \log_2(q)$ bits of information, thus the rate is given by 
\begin{align*}
R_\mathsf{QPIR} &= \frac{2 k \b \log_2(q)}{\log_2(q^{\rho n})} \\
&= \frac{2 \rho c \log_2(q)}{\rho n \log_2(q)} = \frac{2(n - k - t + 1)}{n}.
\end{align*}
\end{proof}

\begin{remark}
If the collusion parameter  $t$ is such that $1 \leq k+t-1 < n/2$, the presented scheme for $t = n/2 - k + 1$ for even $n$ has rate 1. Since the rate cannot be greater than 1, it is capacity achieving. If $n$ is odd, we just consider $n-1$ servers and $t=(n+1)/2 - k$ in order to achieve rate 1.
\end{remark}

\section{$[6,3]$-coded storage example with $2$-collusion}

Let us choose $q = 7$, $n=6$ and $k=3$. We consider a $[6,3]$ primitive Reed-Solomon (PRS) code~\cite[Ch.~10.2]{macwilliams1977theory} $\cC'$ with generator matrix
\[
G_{\cC'} = \ppmatrix{1 & 1 & 1 & 1 & 1 & 1 \\
                     1 & 3 & 2 & 6 & 4 & 5 \\
                     1 & 2 & 4 & 1 & 2 & 4}.
\]
We have 6 servers containing $m$ files stored according to the Cartesian product $\cC = \cC' \times \cC'$ with generator matrix $G_\cC = \diag{G_{\cC'},G_{\cC'}} \in \F_7^{6 \times 12}$. Let $t=2$ and let $\cD'$ be a $[6,2]$ PRS code $\cD'$ with generator matrix
\[
G_{\cD'} = \ppmatrix{1 & 1 & 1 & 1 & 1 & 1 \\
                     1 & 3 & 2 & 6 & 4 & 5}.
\]
The query code is the Cartesian product $\cD = \cD' \times \cD'$ with generator matrix $G_\cD = \diag{G_{\cD'},G_{\cD'}} \in \F_7^{4 \times 12}$.

The star product code $\cS = \cC \star \cD$ has distance $d_\cS = 3$. Thus, from each server the user can download at most $c = 2$ blocks of information per round. By Eq.~\eqref{eq:starCartesian}, since both $\cC$ and $\cD$ are Cartesian products of PRS codes, also $\cS$ is the Cartesian product of two PRS codes generated by $\cS' = \cC' \star \cD'$. Let
\[
    G_{\cS'} = \ppmatrix{1 & 3 & 2 & 6 & 4 & 5 \\
                         1 & 2 & 4 & 1 & 2 & 4 \\
                         1 & 1 & 1 & 1 & 1 & 1 \\
                         1 & 6 & 1 & 6 & 1 & 6} =
                \ppmatrix{H_{\cS'} \\ F_{\cS'}} \in \F_7^{4 \times 6}
\]
be the generator matrix of the star product code $\cS'$, where $H_{\cS'} \in \F_7^{2 \times 6}$ is the standard parity-check matrix of $\cS'$ and $F_{\cS'} \in \F_7^{2 \times 6}$. One can check that $\cS'$ is indeed a weakly self-dual PRS code. Then the generator matrix of $\cS$ is given by 
\[
G_\cS = \ppmatrix{\diag{H_{\cS'},H_{\cS'}} \\ \diag{F_{\cS'},F_{\cS'}}} \in \F_7^{8 \times 12}.
\]

Each file is divided into pairs of $k=3$ pieces and $\b=2$ blocks. The user will need a total of $\rho = 3$ rounds in order to download the necessary information and reconstruct the desired file. Each server contains a matrix of symbols in $\F_7^{2m \times 2}$. For example, server 2 stores $$Y_{p,2}^{i,b} = X_{p,1}^{i,b} + 3 X_{p,2}^{i,b} + 2 X_{p,3}^{i,b}$$ for $i \in [m], b \in [2], p \in [2]$. 

We fix $\cJ = [3]$, so $\cJ_1 = [2]$ and $\cJ_1^1 = \gpar{1}, \cJ_1^2 = \gpar{2}$. Thus, according to Eq.~\eqref{eq:MatrixN}, we set $N^{(1)} = (I_2 \ \ 0^{2\times 4})$.
Hence, $M^{(1)} = \diag{N^{(1)},N^{(1)}} \in \F_7^{4 \times 12}$ is such that the row vectors of the matrix $(G_\cS^\top \ \ (M^{(1)})^\top)^\top$ form a basis for $\F_7^{12}$.

First, the quantum systems are prepared and distributed to the servers according to the first step of the scheme. 

The user samples uniformly at random $Z^{(1)} \in \F^{2m \times 4}_7$. Let $$E_{(K)} = \ppmatrix{e_{(K,1)}^{2m} & e_{(K,2)}^{2m} & e_{(K,1)}^{2m} & e_{(K,2)}^{2m}} \in \F_7^{2m \times 4}.$$ Notice that the row in coordinate $(i,b)$ of the product $E_{(K)} \cdot M^{(1)}$ is
\[
    \d_{i,K} \Big( \d_{b,1} \big( e_{(1,1)}^{12} + e_{(2,1)}^{12} \big) + \d_{b,2} \big( e_{(1,2)}^{12} + e_{(2,2)}^{12} \big) \Big)^\top.
\]
Then, with this choice, the user will retrieve the first block (with $\d_{b,1}$) of the symbols stored in server 1 (with $e_{(p,1)}^{12}$) and the second block (with $\d_{b,2}$) of the symbols stored in server 2 (with $e_{(p,2)}^{12}$) with the desired position $K$ (with $\d_{i,K}$). The user generates the queries according to Eq.~\eqref{eq:Queries} and sends them to the servers. For example, the query to server 2 has symbols $$Q_{p,2}^{(1),i,b} = Z_{p,1}^{(1),i,b} + 3 Z_{p,2}^{(1),i,b} + \d_{i,K} \d_{b,2}$$ for $i \in [m], b \in [2], p \in [2]$.  

The servers compute the responses $A^{(1)}_{p,s} = Y_{p,s}^\top \cdot Q_{p,s}^{(1)} \in \F_7$, $p \in [2], s \in [6]$. Server $s$ applies $\sX(A^{(1)}_{1,s})$ and $\sZ(A^{(1)}_{2,s})$ to its quantum system and sends it to the user.

By Eq.~\eqref{eq:Response}, the response vector is
\[ \begin{split}
    A^{(1)} \in \tV^{\perp_J} + \ppmatrix{Y_{1,1}^{K,1} & Y_{1,2}^{K,2} & Y_{2,1}^{K,1} & Y_{2,2}^{K,2}} \cdot M^{(1)}.
\end{split} \]
Then, the user obtains $\tpar{Y_{1,1}^{K,1}, Y_{1,2}^{K,2}, Y_{2,1}^{K,1}, Y_{2,2}^{K,2}} \in \F_7^4$ as output without error.

The other two rounds are analogous by choosing $\cJ_2^1 = \gpar{2}, \cJ_2^2 = \gpar{3}$ and $\cJ_3^1 = \gpar{3}, \cJ_3^2 = \gpar{1}$.

Finally, after 3 rounds the user recovers the symbols $(Y_{1,\k}^{K,b} \ | \ Y_{2,\k}^{K,b}) \in \F_7^{2}$ for each $b \in [2], \kappa \in [3]$. From these symbols the user can easily recover the desired file $X^K$ by solving a system of linear equations. The user downloaded a total of 18 $7$-dimensional quantum systems and gathered 12 symbols of $\F_7$, thus the rate is given by $R_\mathsf{QPIR} = \frac{12}{18} = \frac{2}{3}$.



\section*{Acknowledgments}
The authors would like to thank Prof.~M. Hayashi and S.~Song for helpful discussions.

\newpage


\bibliographystyle{IEEEtran}
\bibliography{main}

\end{document}